\newcommand{\1}{\mathbf{1}}
\newcommand{\emi}{\mathbb{E}}
\newcommand{\ntt}{N}
\newcommand{\ppx}{X}
\newcommand{\pra}{X}
\newcommand{\qry}{\mathbf{Q}}
\newcommand{\qs}{Q}
\newcommand{\rlv}{R}
\newcommand{\samps}{\Omega}
\newcommand{\nrlv}{\overline{\rlv}}
\newcommand{\nra}{\ntt_{\pra \rlv}}
\newcommand{\nrr}{\ntt_{\rlv}}
\newcommand{\prx}{x}
\newcommand{\nnr}{\ntt_{\nrlv}}
\newcommand{\nna}{\ntt_{\pra \nrlv}}
\newtheorem{theorem}{Theorem}
\newtheorem{lemma}[theorem]{Lemma}
\DeclareMathOperator{\trc}{Tr}
\title{Quantum emulation of query extension in information retrieval}
\author{
Rom\`an R. Zapatrin\\
\small\itshape
Informatics Dept., The State Russian Museum,\\
\small\itshape
In\.zenernaya 4, 191186, St.Petersburg, Russia;\\
\small\rm e-mail: Roman.Zapatrin at Gmail.com }
\begin{document}

\maketitle

\begin{abstract}
An operationalistic scheme, called Melucci metaphor, is suggested representing Information Retrieval as physical measurements with beam of particles playing the role of the flow of retrieved documents. The possibilities of query expansion by extra term are studied from this perspective, when the particles-`docuscles' are assumed to be of classical or quantum nature. It is shown that in both cases the choice of an extra term based on Bayesian belief revision is still valid on the qualitative level for boosting the relevance of the retrieved documents. 
\end{abstract}

\section{Basic settings: Melucci metaphor}\label{sbb}

Nowadays Information Retrieval, which passed a long way from data retrieval towards knowledge mining deals with new kind of reality, to say more, the new kind of reality is produced. The amount of data became immense, furthermore, their nature changed. Many documents are no longer static entities, rather, their emerge during the act of retrieval. As a result, new kind of theory is demanded, in which the r\^ole of a particular document is miserated. An analogy of this is thermodynamics, where the characteristics of a particular molecule plays no r\^ole, and the observed and employed results are of cumulative nature. 

Within the suggested approach, two parties are considered to be involved in the IR process: the User and the Engine. The User submit queries, they somehow affect the Engine, as a result she produces a sequence of replies. Each reply is evaluated by the User to be relevant or not. One more thing is that the notion of relevance becomes more and more ambiguous. I by no means try to formalize it; the main assumption is that after each query act the User decides (we need not specify, how it happens) which results are relevant. So, relevance is a verifiable property. Besides the relevance, one can check the occurrence of this or that term in each document and again, the occurrence of a particular term is a verifiable property of the document. As a consequence, we see that the Information Retrieval process is similar to measuring physical quantities: the User affects the Engine, the Engine returns results, the User verifies their properties. We consider a simplified model of IR process, called Melucci metaphor.

\paragraph{Melucci metaphor. Docuscles.} The documents are treated as particles, emitted from a source, call it the Emitter $\emi$; we call these particles `docuscles’, they are emitted in certain state $\qs$, which, in turn, depends on the User's query $\qry$. The docuscles arrive to a detector –- the User observes the results, decides if they are relevant, furthermore, the User can test other properties. The main property to test is $\rlv$, the relevance (with respect to query $\qry$), so each docuscle passing through the Testing Appliance produces a click signaling its relevance either non-relevance. The number of emitted docuscles is assumed to be potentially infinite, but the setup of the experiment is such that only first $\ntt$ docuscles are tested to be relevant. 
\unitlength.7mm
\[
\begin{picture}(180,60)
\put(0,30){\oval(40,40)[r]}
\put(6,28){$\emi$}
\put(21,33){\vector(2,1){15}}
\put(22,32){\vector(4,1){14.5}}
\put(22,30){\vector(1,0){14}}
\put(22,28){\vector(4,-1){14.5}}
\put(21,27){\vector(2,-1){15}}
\put(50,3){\line(0,1){8}}
\put(50,19){\line(0,1){16}}
\put(50,43){\line(0,1){8}}
\multiput(49,11)(0,8){2}{\line(1,0){2}}
\put(52,21){$\nrlv$}
\multiput(49,35)(0,8){2}{\line(1,0){2}}
\put(52,44){$\rlv$}
\multiput(48.8,10)(0.1,0){4}{\line(0,1){10}}
\end{picture}
\]
The number of relevant (irrelevant) documents is denoted by $\nrr$ ($\nnr$, respectively). So, $\ntt$ decomposes into
\[
\ntt=\nrr+\nnr
\]
That is, we have the frequency, which is interpreted as probability
\[
r=P({\rlv})=\frac{\nrr}{\ntt}\quad ; \qquad P({\nrlv})=\frac{\nnr}{\ntt}=1-r
\]

\paragraph{Query expansion.} After the relevance test is carried out, each beam is further examined by the next device, which tests the presence of certain property $\pra$. The number of relevant (non-relevant) docuscles having the property $\pra$ is denoted by $\nra$ ($\nna$, respectively). In terms of frequencies definition of probability we possess the following measured values:
\[
p=P(\pra\vert {\rlv})=\frac{\nra}{\nrr}\quad ; \qquad q=P(\pra\vert {\nrlv})=\frac{\nna}{\nnr}
\]
From this experiment we get the value of $P(\ppx\vert\rlv)$:
\[
\begin{picture}(180,60)
\put(0,30){\oval(40,40)[r]}
\put(6,28){$\emi$}
\put(21,33){\vector(2,1){15}}
\put(22,32){\vector(4,1){14.5}}
\put(22,30){\vector(1,0){14}}
\put(22,28){\vector(4,-1){14.5}}
\put(21,27){\vector(2,-1){15}}
\put(50,3){\line(0,1){8}}
\put(50,19){\line(0,1){16}}
\put(50,43){\line(0,1){8}}
\multiput(49,11)(0,8){2}{\line(1,0){2}}
\put(52,21){$\nrlv$}
\multiput(49,35)(0,8){2}{\line(1,0){2}}
\put(52,44){$\rlv$}
\multiput(48.8,10)(0.1,0){4}{\line(0,1){10}}
\multiput(57,36)(0,3){3}{\vector(1,0){14}}
\put(77,24){\framebox(27,27){Check $\ppx$}}
\end{picture}
\]
and from this experiment we get the value of $P(\ppx\vert\nrlv)$:
\[
\begin{picture}(180,55)
\put(0,30){\oval(40,40)[r]}
\put(6,28){$\emi$}
\put(21,33){\vector(2,1){15}}
\put(22,32){\vector(4,1){14.5}}
\put(22,30){\vector(1,0){14}}
\put(22,28){\vector(4,-1){14.5}}
\put(21,27){\vector(2,-1){15}}
\put(50,3){\line(0,1){8}}
\put(50,19){\line(0,1){16}}
\put(50,43){\line(0,1){8}}
\multiput(49,11)(0,8){2}{\line(1,0){2}}
\put(52,21){$\nrlv$}
\multiput(49,35)(0,8){2}{\line(1,0){2}}
\put(52,44){$\rlv$}
\multiput(48.8,34)(0.1,0){4}{\line(0,1){10}}
\multiput(57,12)(0,3){3}{\vector(1,0){14}}
\put(77,3){\framebox(27,27){Check $\ppx$}}
\end{picture}
\]
When we are in the classical realm, there is no need to calculate $P(\ppx)$ due to our Boolean belief revision (that is, the law of total probability):
\begin{equation}\label{eltp}
P(\ppx)=P(\ppx\vert\rlv)\,P(\rlv)+P(\ppx\vert\nrlv)\,P(\nrlv)
\end{equation}
But if we decide to measure $P(\ppx)$ directly, removing the relevance check:
\[
\begin{picture}(180,55)
\put(0,30){\oval(40,40)[r]}
\put(6,28){$\emi$}
\put(21,33){\vector(2,1){15}}
\put(22,32){\vector(4,1){14.5}}
\put(22,30){\vector(1,0){14}}
\put(22,28){\vector(4,-1){14.5}}
\put(21,27){\vector(2,-1){15}}
\put(50,3){\line(0,1){8}}
\put(50,19){\line(0,1){16}}
\put(50,43){\line(0,1){8}}
\multiput(49,11)(0,8){2}{\line(1,0){2}}
\put(52,21){$\nrlv$}
\multiput(49,35)(0,8){2}{\line(1,0){2}}
\put(52,44){$\rlv$}
\multiput(57,36)(0,3){3}{\vector(1,0){14}}
\multiput(57,12)(0,3){3}{\vector(1,0){14}}
\put(77,11){\framebox(33,34){Check $\ppx$}}
\end{picture}
\]
the result may in general violate the Law of Total Probability \eqref{eltp} if the particles we are dealing with are of quantum nature. Why Quantum Mechanics becomes popular for IR models such as, say \cite{Piwowarski-Rijsbergen}? The first reason is that these models produce results. The second reason is that from the operationalistic point of view it resembles modern IR. Properties of a corpuscle do not pre-exist before a measurement act is carried out, rather, they emerge during the measurement act. 

\paragraph{Relevance boost.} Query expansion makes sense if we gain something, namely, we would like to increase the rate of relevant documents among the retrieved ones. For that, the search setting is modified. After the beam of docuscles is emitted in state $\qs$, it is first subject for selection to possess the property $\pra$. Then the relevance check for $\ntt$ (post-selected!) docuscles is carried out and we are left with the ratio
\[
\prx=
\frac{\rlv_\pra}{\ntt}
\]
\[
\begin{picture}(180,60)
\put(0,30){\oval(40,40)[r]}
\put(6,28){$\emi$}
\put(21,33){\vector(2,1){15}}
\put(22,32){\vector(4,1){14.5}}
\put(22,30){\vector(1,0){14}}
\put(22,28){\vector(4,-1){14.5}}
\put(21,27){\vector(2,-1){15}}
\put(50,3){\line(0,1){8}}
\put(50,19){\line(0,1){16}}
\put(50,43){\line(0,1){8}}
\multiput(49,11)(0,8){2}{\line(1,0){2}}
\put(52,21){$\overline{\ppx}$}
\multiput(49,35)(0,8){2}{\line(1,0){2}}
\put(52,44){$\ppx$}
\multiput(48.8,10)(0.1,0){4}{\line(0,1){10}}
\multiput(57,36)(0,3){3}{\vector(1,0){14}}
\put(77,24){\framebox(27,27){Check $\rlv$}}
\end{picture}
\]
We interpret this the result of query expansion as the conditional probability
\begin{equation}\label{edefx}
\prx=P({\rlv}\vert \pra)
\end{equation}
Relevance boost is a numerical characteristics of the efficiency of query expansion, to what extent increases the probability of a trapped docuscle to be relevant. In terms of probabilities it is formulated as the condition 
\begin{equation}\label{ecboost}
P({\rlv}\vert \pra) > P({\rlv})
\quad\mbox{ or }\quad
\prx>r
\end{equation}
To decide, which query expansion is to be applied, we possess with the following measured frequencies, which we interpret as probabilities:
\begin{equation}\label{emesval}
p=P(\pra \vert \rlv) 
\quad ; \quad
q=P(\pra \vert \nrlv) 
\quad ; \quad
r=P({\rlv})
\end{equation}
and use this or that theoretical model to evaluate $x$. Then $x$ can be obtained from experiments and this can test the applicability of the model.

\section{Implementations}\label{simplem}

\paragraph{Classical setting.} Let $\qs$ be a probability distribution on a sample space $\samps$, and $\pra ,{\rlv}$ being properties, are subsets of $\samps$. 
That means that the appropriate density operators commute, so $\samps$ is a diagonal operator and $\pra ,{\rlv}$ are characteristic functions of appropriate subsets.

\unitlength.7mm
\[
\begin{picture}(180,100)
\put(0,0){\framebox(140,100)[tr]{}}
\put(30,90){$\samps$}
\put(10,10){\framebox(70,70){\mbox{${\rlv}$}}}
\put(54,2){\framebox(60,40)[rb]{\mbox{$\vphantom{\int} \pra \quad$}}}
\put(66,20){\framebox(70,70){\mbox{Retrieved}}}
\end{picture}
\]
In this case the condition for expanding the query by the term $\pra$ to increase the relevance is formulated in purely classical, set-theoretical terms as follows.

\begin{lemma}\label{tclass}
\begin{equation}\label{eclass1}
P({\rlv}\vert \pra) > P({\rlv})
\quad\mbox{if and only if}\quad
P(\pra \vert {\rlv}) > P(\pra \vert {\nrlv})
\end{equation}
or, in terms of \eqref{emesval}
\begin{equation}\label{eclass2}
\prx > r
\quad\mbox{if and only if}\quad
p > q
\end{equation}
\end{lemma}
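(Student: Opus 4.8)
The plan is to express $x=P({\rlv}\vert \pra)$ entirely in terms of the three measured quantities $p,q,r$ of \eqref{emesval} and then compare it with $r$ by a single algebraic manipulation. First I would invoke Bayes' rule to write
\[
x = P({\rlv}\vert \pra) = \frac{P(\pra\vert {\rlv})\,P({\rlv})}{P(\pra)} = \frac{p\,r}{P(\pra)}.
\]
Since we are in the classical realm, the Law of Total Probability \eqref{eltp} holds with $\ppx$ replaced by $\pra$, giving $P(\pra)=p\,r+q\,(1-r)$. Substituting yields the closed form
\[
x = \frac{p\,r}{p\,r + q\,(1-r)}.
\]

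The second step is to subtract $r$ and collect the terms over a common denominator. I expect the numerator to factor cleanly: a short computation gives
\[
x - r = \frac{p\,r - r\bigl(p\,r + q\,(1-r)\bigr)}{p\,r + q\,(1-r)} = \frac{r\,(1-r)\,(p - q)}{p\,r + q\,(1-r)}.
\]
The last step is to read off the sign. Under the mild nondegeneracy assumptions $0<r<1$ (both relevant and irrelevant docuscles occur) and $P(\pra)=p\,r+q\,(1-r)>0$ (the property $\pra$ is observed at all), the prefactor $r(1-r)/P(\pra)$ is strictly positive, so $x-r$ carries exactly the sign of $p-q$. This delivers both directions of the equivalence simultaneously, establishing \eqref{eclass2} and hence \eqref{eclass1}.

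The argument is elementary, so there is no deep obstacle; the only point requiring care is to keep the nondegeneracy conditions explicit, since they are precisely what makes the prefactor \emph{strictly} positive (not merely nonnegative) and what guarantees that every conditional probability in play is well defined. If one prefers to avoid Bayes' rule and stay in the set-theoretic picture drawn above, the same conclusion follows by writing each probability as a ratio of counts ($p=\nra/\nrr$, $q=\nna/\nnr$) and clearing the positive denominators $\nrr,\nnr$, whereupon $p>q$ reduces to the single inequality $\nra\,\nnr > \nna\,\nrr$; the Bayesian route is cleaner, however, and makes the role of belief revision transparent.
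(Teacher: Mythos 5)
Your proposal is correct and rests on exactly the same ingredients as the paper's proof (Bayes' rule plus the Law of Total Probability \eqref{eltp}); the paper merely arranges them as a chain of equivalent inequalities ending in $P(\pra\vert\rlv)\,P(\nrlv)>P(\pra\vert\nrlv)\,P(\nrlv)$, whereas you package them into the closed form $\prx-r=r(1-r)(p-q)/\bigl(pr+q(1-r)\bigr)$. A minor point in your favour: you state explicitly the nondegeneracy conditions $0<r<1$ and $P(\pra)>0$ that the paper divides by silently.
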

\begin{proof}Consider the sequence of equivalent expressions
\[
P({\rlv}\vert \pra) > P({\rlv})
\]
\[
\frac{P({\rlv} \pra)}{P(\pra)} > P({\rlv})
\]
\[
P({\rlv} \pra)>P(\pra)\, P({\rlv})
\]
\[
P(\pra \vert \rlv)\,P(\rlv)>P(\pra)\, P({\rlv})
\]
\[
P(\pra \vert \rlv) >P(\pra)
\]
\[
P(\pra \vert \rlv) >P(\pra \vert {\rlv})\,P(\rlv)+ P(\pra \vert {\nrlv})\,P(\nrlv)
\]
\[
P(\pra \vert {\rlv})\,P(\nrlv)> P(\pra \vert {\nrlv})\,P(\nrlv)
\]
\[
P(\pra \vert {\rlv}) > P(\pra \vert {\nrlv})
\]
This is the reason to call the choice of $\pra $, satisfying $p>q$, natural. 
\end{proof}

\paragraph{Quantum setting.} Suppose we are dealing with quantum systems and the emitted docuscles are in quantum state associated with a density operator $\qs$. Then this measurement is associated with the projection operator ${\rlv}$, and the probability of a docuscle to be relevant is
\[
r=P({\rlv}) = \trc \rlv\qs
\]
The projector associated with the property $\pra$ is denoted also by $\pra$, and we have the following expressions for appropriate conditional probabilities
\begin{equation}\label{emesvalq}
p=P(\pra \vert{{\rlv}})=
\frac{\trc(\rlv \qs\rlv \pra)}{r}
\quad ; \quad
q=P(\pra \vert{{\nrlv}})=
\frac{\trc(\nrlv \qs\nrlv \pra)}{1-r}
\end{equation}
Starting from this, we make query expansion. In terms of Melucci metaphor this looks as follows. Initially, the docuscles are emitted in state $\qs$. After testing the property associated with the projector $\pra $, the beam is in state 
\[
\qs_\pra =
\frac{\pra \qs\pra }{\trc \pra \qs}
\]
Then the relevance test is carried out, and the appropriate probability is:
\begin{equation}\label{epra}
\prx=P({\rlv}\vert \pra)=
\trc \qs_\pra {\rlv}=
\frac{\trc (\pra \qs\pra {\rlv})}{\trc \pra \qs}
\end{equation}
So, the relevance boost is desribed as $P({\rlv}\vert \pra)>P({\rlv})$, and in terms of density operators takes the form:
\begin{equation}\label{ecqboost}
\trc\,(\!\pra \qs\pra {\rlv})>\trc \pra \qs\cdot \trc \qs{\rlv}
\end{equation}
Transform the rhs of the above expression using $\rlv+\nrlv=\1$:
\[
\trc \pra \qs\cdot \trc \qs{\rlv}
=
\left(\trc (\pra \qs\pra \rlv)+\trc (\pra \qs\pra \nrlv)\right)\cdot \trc \qs{\rlv}
\]
therefore the expression \eqref{ecqboost} takes the form
\[
\trc\,(\!\pra \qs\pra {\rlv})(1- \trc \qs{\rlv})>\trc (\pra \qs\pra \nrlv) \cdot \trc \qs{\rlv}
\]
which is equivalent to
\[
\frac{\trc\,(\!\pra \qs\pra {\rlv})}{ \trc \qs{\rlv}}
\,>\,
\frac{\trc (\pra \qs\pra \nrlv)}{ \trc \qs{\nrlv}}
\]
So, like in classical case \eqref{eclass2}, where the choice of a term for expansion is based on Bayesian belief revision (that is, the Law of Total Probability), we have the same criterion, at least on a qualitative level, loosely speaking `making better in quantum style is making better in classical fashion':
\[
\prx > r
\quad\mbox{if and only if}\quad
p > q
\]

\section{Conclusions}\label{scompare}

The starting point for this research were the experimental results obtained by Melucci \cite{Melucci2010}. Their idea was the following. Instead of automated expansion of queries (and then testing the relevance of the retrieved results), the queries were expanded manually, by in certain sense `wrong' terms: they do not obey the Law of Total Probability. The straightforward idea was to model this by quantum mechanics. The evaluations presented in this paper show, however, that when we model the IR process by either classical or quantum experiment, we see that still Bayesian belief revision rules. 

This may have a rather controversial interpretation: Information Retrieval is to be treated as a novel physical reality, whose physical laws are different from both classical and conventional quantum mechanics. Furthermore, physical theories, which were falsified by experiments in `real life', acquire a new chance for rebirth. Yet this needs further investigation.

Throughout the paper the terms occurring were treated as properties of the retrieved documents. Suppose we are now dealing with other kind of resources, which may be not textual. Perhaps the development of methods based on Melucci metaphor might be more appropriate for image or multimedia search.


\begin{thebibliography}{99}


\bibitem{Melucci2010} M.~Melucci, 
\textit{An investigation of quantum interference in information retrieval,} 
In {Proceedings of the Information Retrieval Facility Conference
(IRFC)}, 2010

\bibitem{Piwowarski-Rijsbergen}
B. Piwowarski, I. Frommholz, M. Lalmas, and K. van Rijsbergen, \textit{What can Quantum Theory bring to IR?,} in CIKM’10: Proceedings of the XIX ACM conference on Conference on information and knowledge management, 2010.

\end{thebibliography}
\end{document}